\theoremstyle{plain}
\newtheorem{theorem}{Theorem}
\newtheorem{lemma}{Lemma}
\theoremstyle{definition}
\newtheorem{definition}{Definition}
\theoremstyle{remark}
\DeclareMathOperator{\Tr}{Tr}
\DeclareMathAlphabet{\mathpzc}{OT1}{pzc}{m}{it}
\DeclareMathAlphabet{\mathcalligra}{T1}{calligra}{m}{n}
\newcommand{\cE}{\mathcal{E}}
\newcommand{\cG}{\mathcal{G}}
\newcommand{\cH}{\mathcal{H}}
\newcommand{\cM}{\mathcal{M}}
\newcommand{\cS}{\mathcal{S}}
\newcommand{\cX}{\mathcal{X}}
\newcommand{\cY}{\mathcal{Y}}
\newcommand{\rA}{\mathrm{A}}
\newcommand{\rB}{\mathrm{B}}
\newcommand{\rC}{\mathrm{C}}
\newcommand{\rG}{\mathrm{G}}
\newcommand{\rH}{\mathrm{H}}
\newcommand{\ketbra}[2]{\ket{#1}\!\bra{#2}}
\begin{document}

\title{One-shot and asymptotic classical capacity in general physical theories}
\author{Shintaro Minagawa}
\email{minagawa.shintaro@nagoya-u.jp}
\affiliation{Graduate School of Informatics, Nagoya University, Furo-cho, Chikusa-ku, Nagoya 464-8601, Japan}
\author{Hayato Arai}
\email{hayato.arai@riken.jp}
\affiliation{RIKEN, Center for Quantum Computing, Mathematical Quantum Information RIKEN Hakubi Research Team, 2-1 Hirosawa, Wako 351-0198, Japan}

\begin{abstract}
    With the recent development of quantum information theory, some attempts exist to construct information theory beyond quantum theory. Here we consider hypothesis testing relative entropy and one-shot classical capacity, that is, the optimal rate of classical information transmitted by using a single channel under a constraint of a certain error probability, in general physical theories where states and measurements are operationally defined. Then we obtain the upper bound of one-shot classical capacity by generalizing the method given by Wang and Renner [Phys. Rev. Lett. 108, 200501 (2012)].
    Also, we derive the lower bound of the capacity by showing the existence of a good code that can transmit classical information with a certain error probability.
    Applying the above two bounds, we prove the asymptotic equivalence between classical capacity and hypothesis testing relative entropy even in any general physical theory.
\end{abstract}

\maketitle

\section{Introduction}
Since Shannon invented the information theory \cite{shannon1948mathematical}, it has been increasingly important (see e.g., Ref.~\cite{Cover2006}).
The goal of information theory is basically to express the optimal efficiency for some tasks, and the optimal efficiencies for different tasks are sometimes equivalent or directly related through some information quantities like mutual information; a typical example is the asymptotic equivalence between the exponential rate of hypothesis testing and the classical information transmission capacity~\cite{verdu1994general}.

Recently, as quantum information theory (see e.g., Refs.~\cite{wilde2017quantum,watrous2018theory}) has flourished, similar relations are known in quantum theory.
In particular, the same relationship between hypothesis testing and channel capacity also holds in quantum theory~\cite{hiai1991proper,ogawa2005strong,schumacher1995quantum,schumacher1997sending,holevo1998capacity,bennett2002entanglement}.
Such facts imply that an information theory should possess such relations between the optimal efficiencies for some tasks independently of the mathematical structure of its background physical systems.

However, when we establish an information theory standing by the operationally minimum principles,
possible models of background physical systems are not restricted to classical and quantum theory.
Such theories are called \textit{General Probabilistic Theories}~\cite{janotta2014generalized,dariano-OPT,plavala2023general,muller2021probabilistic,ludwig1964,ludwig1967,davies1970operational,gudder1973,ozawa1980optimal,popescu1994quantum,hardy2001quantum,barrett2007information,pawlowski2009information,kimura2009optimal,short2010entropy,barnum2010entropy,kimura2010distinguishability,nuida2010optimal,chiribella2010probabilistic,Chiribella2011Informational,muller2012structure,barnum2014higher,chiribella2015entanglement,chiribella2015operational,kimura2016entropies,krumm2017thermodynamics,jencova2018incompatible,arai2019perfect,yoshida2020perfect,shahandeh2021contextuality,selby2021accessible,wakakuwa2021gentle,perinotti2022shannon,aubrun2021entangleability,aubrun2022entanglement,jenvcova2022assemblages,minagawa2022neumann,perinotti2023entropy}, in short, GPTs (for a review, see, e.g., Refs.~\cite{dariano-OPT,janotta2014generalized,plavala2023general,muller2021probabilistic}).
The framework of GPTs is a kind of generalization of classical and quantum theory
whose states and measurements are operationally defined,
and studies of GPTs have been widespread recently.

Even in such general models, some properties of information theory also hold similarly to quantum theory.
One of such results of preceding studies of GPTs is the no-cloning theorem in GPTs~\cite{barnum2006cloning}.
It is clarified that any model except for classical theory cannot copy any information freely, similar to the no-cloning theorem in quantum theory, which means that quantum theory is not a special theory with no-cloning, but the classical theory is a special theory with cloning.

On the other hand, some properties of information theory are drastically changed in GPTs.
A typical example is \emph{entropy}.
In quantum theory, there are several methods to characterize von Neumann entropy $S(\rho):=-\Tr[\rho\log\rho]$~\cite{von1955mathematical} based on the classical Shannon entropy, but the von Neumann entropy is well-defined without a choice of the way of classicalization~\cite{wilde2017quantum,watrous2018theory}.
On the other hand, it is known that such well-definedness is not valid in GPTs,
i.e.,
there is no simple way to define entropy similar to the von Neumann entropy in GPTs \cite{barnum2010entropy,short2010entropy,kimura2010distinguishability,perinotti2022shannon,perinotti2023entropy}.
A certain generalization of von Neumann entropy is not even concave~\cite{short2010entropy,kimura2010distinguishability}.

Because entropy is not generalized straightforwardly in GPTs, we cannot easily obtain a similar result of optimal efficiency for certain information tasks.
Therefore, it is a difficult question whether there are the same relations between optimal efficiencies for different tasks as the relations in classical and quantum theory.
If the answer is positive, i.e., relations between different information tasks are independent of the mathematical structure of physical systems even though entropies do not behave the same, we can reach a new foundational perspective on information theory;
``Efficiencies for information tasks give more robust definitions of information quantities than entropies in GPTs because of the independence of the mathematical structure of physical systems.''

In this paper, we discuss hypothesis testing and classical information transmission in GPTs in the same way as classical and quantum theory following Ref.~\cite{wang2012one}.
Next, we estimate the upper and lower bound of one-shot classical capacity by hypothesis testing relative entropy\footnote{This quantity was first introduced by Ref.~\cite{buscemi2010quantum}, and the relationship to one-shot classical capacity was derived by Ref.~\cite{wang2012one}, but we generalize it to GPTs in this paper.} in GPTs.
As a result, we obtain upper and lower bounds similar to that of quantum theory.
Moreover, due to the construction of the achievable case of our bound, our result of the one-shot case can be applied to the asymptotic case even though the asymptotic scenario is complicated in GPTs.
Consequently, we show the asymptotic equivalence between the above two efficiencies even in GPTs.

The subsequent structure of this paper is as follows.
In section \ref{section:GPTs}, we introduce basic notions of GPTs.
In section \ref{section:HTRE}, we introduce hypothesis testing relative entropy in GPTs.
In section \ref{section:capacity}, we derive the one-shot classical capacity theorem in GPTs.
In section \ref{section:asymptotic}, we introduce the asymptotic setting and show the asymptotic equivalence between on-shot classical capacity and hypothesis testing relative entropy in GPTs.

\section{General probabilistic theories}\label{section:GPTs}
Here we introduce the mathematical basics of GPTs following Refs.~\cite{yoshida2020perfect,janotta2014generalized,plavala2023general,muller2021probabilistic}.
Let $V$ be a finite-dimensional real vector space and the subset $K\subset V$ be a positive cone, i.e., a set satisfying the following three conditions: (i) $\lambda x\in K$ holds for any $x\in K$ and any $\lambda\ge0$. (ii) $K$ is convex and has a non-empty interior. (iii) $K\cap(-K)=\{0\}$.
The dual cone of $K$, denoting $K^*$ is defined as follows:
\begin{equation}
	K^*:=\{y\in V^*\mid \langle y,x\rangle\ge 0\ \forall x\in K\}
\end{equation}
where $\langle,\rangle$ is the inner product of the vector space $V$.
Besides, an inner point $u\in K^*$, called \textit{unit effect}, is fixed for a model.
Then, a state in this model is defined as an element $\rho\in K$ satisfying $\langle \rho,u\rangle=1$. The state space, i.e., the set of all states, is denoted as $S(K)$.
Due to the convexity of $K$, the state space $S(K)$ is also convex.

Also, a measurement is defined as a family $\bm{e}:=\{e_j\}_{j\in J}$ satisfying $e_j\in K^\ast$ for any $j\in J$ and $\sum_{j\in J} e_j=u$.
The measurement space, i.e., the set of all measurements with finite outcomes, is denoted as $\cM(K)$.
Here, $\langle e_j,\rho\rangle$ corresponds to the probability of obtaining an outcome $j\in J$ when we perform a measurement $\mathbf{e}$ to a state $\rho\in\cS(K)$.
Next, we define an order relation $\ge$ on $K^*$.
We say that $f\ge e$ if $f-e\in K^*$.
This means that for any element $x\in K$, $\langle f,x\rangle\ge\langle e,x\rangle$.
Here, we remark that a family $\{e,u-e\}$ is a measurement in $\cM(K)$
if and only if the element $e\in K^*$ satisfies $0\le e\le u$ by using the above order relation.

Next, we give examples of positive cones and models of GPTs.
The simplest example of positive cones is the positive part of the vector space $\mathbb{R}^n$ defined as
\begin{align}
    \mathbb{R}^n_{+}:=\{(x_i)_{i=1}^n\in\mathbb{R}^n\mid x_i\ge0 \ \forall i\}\;.
\end{align}
Considering the standard inner product on $\mathbb{R}^n$, the dual $\mathbb{R}^{n*}_{+}$ is equivalent to itself.
Fix an unit $\mathbf{u}=(1,1,\cdots,1)$.
Then, the state space $\cS(\mathbb{R}^{n*}_{+})$ is given as the set of all ensembles $\sum_{j\in J} p_j \mathbf{c}^j$,
where $\{p_j\}$ is probability vector and $\mathbf{c}^j:=(c^j_i)_{i=1}^n$ is a vector such that $c^j_i=\delta_{ij}$ with the Kronecker delta $\delta_{ij}$, which corresponds to a classical bit. In this paper, we denote $\mathbf{c}^j$ as $\ketbra{j}{j}$.
Also, a measurement is given as a family $\{\mathbf{e}^j\}_{j\in J}$ of $\mathbf{e}^j\in\mathbb{R}^{n}_{+}$ such that $\sum_j \mathbf{e}^j=\mathbf{u}$,
which corresponds to a strategy to obtain information from an ensemble of the classical $n$-level system.

This model corresponds to a theory of classical information and classical operations, i.e., classical theory.
Quantum theory is also a model of GPT in the case $V=L_{\rH}(\cH)$, $\langle x,y\rangle=\Tr xy$, $K=L_{\rH}^+(\cH)$, and $u=\mathds{1}$.
Here $L_{\rH}(\cH)$ denotes the set of Hermitian matrices on a Hilbert space $\cH$ and $L_{\rH}^+(\cH)$ denotes the set of positive semi-definite matrices on $\cH$.
Also, $\mathds{1}$ is the identity matrix on $\cH$.
In this model, a state is given as a density matrix, and a measurement is given as a POVM.

Next, we define a \textit{measurement channel} associated with a measurement $\mathbf{e}$ as the following map $\cE_{\mathbf{e}}$ from $\cS(K)$ to $\cS(\mathbb R^n_+)$~\cite{plavala2023general}:
\begin{equation}
    \mathcal E_{\mathbf{e}}(\rho):=\sum_{j\in J}\langle e_j,\rho\rangle\ketbra{j}{j}\;.
\end{equation}
We also define an adjoint map of a measurement channel $\cE_{\mathbf{e}}$ as the following map $\cE_{\mathbf{e}}^\dagger$ from $\mathbb{R}^{n*}_+=\mathbb{R}^{n}_+$ to $K^*$ for any $f\in \mathbb{R}^{n}_+$:
\begin{equation}
    \cE^\dagger_{\mathbf{e}}(f):=\sum_{j\in J}\langle f,\ketbra{j}{j}\rangle e_j\;.
\end{equation}
Note that the following equation holds for any $f\in \mathbb{R}^{n}_+$ and any $\rho\in \cS(K)$:
\begin{equation}\label{eq:adjoint}
\begin{split}
    \langle \cE^\dagger_{\mathbf{e}}(f),\rho\rangle&=\left\langle\sum_{j\in J}\langle f,\ketbra{j}{j}\rangle e_j,\rho\right\rangle\\
    &\stackrel{(a)}{=}\sum_{j\in J}\langle e_j,\rho\rangle\langle f,\ketbra{j}{j}\rangle\\
    &\stackrel{(b)}{=}\left\langle f,\sum_{j\in J}\langle e_j,\rho\rangle\ketbra{j}{j}\right\rangle=\langle f,\cE_{\mathbf{e}}(\rho)\rangle\;.
\end{split}
\end{equation}
The equality $(a)$ and $(b)$ hold because of the linearity of the inner product.
Besides, if $0\le f\le u$, it holds that $0\le \cE^\dagger_{\mathbf{e}}(f)\le u$ because $0\le \langle f, \ketbra{j}{j}\rangle\le 1$ holds.
As a result, if $f$ is an effect of a measurement $\{f,u-f\}$, then $\cE^\dagger_{\mathbf{e}}(f)$ is also an effect of the measurement $\{\cE^\dagger_{\mathbf{e}}(f),u-\cE^\dagger_{\mathbf{e}}(f)\}$.

Finally, we define a composite system of classical theory and a general model of the positive cone $K$ in GPTs~\cite{janotta2014generalized,plavala2023general}.
The vector space of the composite system is given by $\mathbb{R}^n\otimes V$.
The corresponding positive cone is given as
\begin{align}
    \mathbb{R}^n_{+}\otimes K:=\mathrm{Conv}\left(\{\ketbra{x}{x}\otimes\rho\mid x\in\mathbb{R}^n_{+}, \rho\in K\}\right),
\end{align}
where $\mathrm{Conv}(S)$ is the convex hull of the set $S$.
The unit is given as the tensor product of units in each system.
A state in the composite system is given as an ensemble of tensor products $\ketbra{x}{x}\otimes \rho$ of a classical bit $\ketbra{x}{x}$ and a general state $\rho\in\cS(K)$.
For a bipartite state $\rho^{\rA\rB}$, the marginal states $\rho^{\rA}$ and $\rho^{\rB}$ are defined as the unique states satisfying the following relations for any pair of a classical measurement $\{e_i^{\rA}\}$ and a general measurement $\{e_j^{\rB}\}$, respectively~\cite{janotta2014generalized}:
\begin{align}
    \sum_j\langle e_i^{\rA}\otimes e_j^{\rB},\rho^{\rA\rB}\rangle=\langle e_i^{\rA},\rho^{\rA}\rangle,\\
    \sum_i\langle e_i^{\rA}\otimes e_j^{\rB},\rho^{\rA\rB}\rangle=\langle e_j^{\rB},\rho^{\rB}\rangle\;.
\end{align}
Here, we note that the marginal states are given as follows if the bipartite state is given as the ensemble $\sum_{j,j'}p_{j,j'}\ketbra{j}{j}^{\rA}\otimes\rho_{j'}^{\rB}$:
\begin{align}
    \rho^{\rA}&=\sum_{j,j'}p_{j,j'}\ketbra{j}{j}^{\rA},\\
   \rho^{\rB}&=\sum_{j,j'}p_{j,j'}\rho_{j'}^{\rB}.
\end{align}

\section{Hypothesis testing relative entropy in GPTs}\label{section:HTRE}
Next, we introduce hypothesis testing relative entropy in general models.
In quantum theory, hypothesis testing relative entropy is defined for $0\le \epsilon\le 1$ as follows \cite{buscemi2010quantum,buscemi2010distilling,wang2012one,dupuis2014generalized}:
\begin{equation}
    D^\epsilon_{\rH}(\rho||\sigma):=-\log_2 \min_{\substack{E:0\le E\le \mathds{1},\\\Tr\{E\rho\}\ge 1-\epsilon}}\Tr\{E\sigma\}\;,\label{eq:def_htre}
\end{equation}
where $\mathds{1}$ is an identity operator.
This definition comes from the hypothesis testing of two quantum states (see e.g., Ref.~\cite{hayashi2017quantum}).
We discriminate the state $\rho$ and $\sigma$ by performing a two-valued measurement with a POVM $\{E,\mathds{1}-E\}$.
There are two kinds of error probabilities, type-I error probability $\Tr\{(\mathds{1}-E)\rho\}$ and type-II error probability $\Tr\{E\sigma\}$.
The definition Eq.~\eqref{eq:def_htre} corresponds to the optimization of the type-II error probability $\Tr\{E\sigma\}$ under the constraint that the type-I error probability has an upper bound $\epsilon$, that is, $\Tr\{(\mathds{1}-E)\rho\}\le\epsilon$.

As a generalization of this definition, we can introduce hypothesis testing relative entropy in GPTs as follows~\footnote{
The previous work~\cite{short2010entropy} does not explicitly define hypothesis testing relative entropy in GPTs like \eqref{eq:def_HTRE_GPT}.
    However, our definition follows the argument in Ref.~\cite{short2010entropy}.
    Indeed, if we take the logarithm of Eq.~(31) in Ref.~\cite{short2010entropy} and minus, we get hypothesis testing relative entropy with $\epsilon=1/2$.
    Our main focuses are properties and applications of the explicit definition of hypothesis testing relative entropy in GPTs.
}:
\begin{definition}[Hypothesis testing relative entropy in GPTs]\label{def:hypo-test-relative-GPTs}
Let $\rho,\sigma\in\Omega$ be states and $q$ be an effect where $0\le \langle q,\rho\rangle\le 1$ holds for any state $\rho\in\Omega$. Let $0\le \epsilon\le 1$ be a real value. We define hypothesis testing relative entropy as follows:
\begin{equation}
    D^\epsilon_{\rH,\rG}(\rho||\sigma):=-\log_2 \min_{\substack{q:\:0\le q\le u,\\\langle q,\rho\rangle\ge 1-\epsilon}}\langle q,\sigma\rangle\;.\label{eq:def_HTRE_GPT}
\end{equation}
\end{definition}

As the following lemma shows, measurement channels do not increase the hypothesis testing relative entropy, which is important for the following discussion.
\begin{lemma}[Data-processing inequality for a measurement channel]\label{lemma:dpi_meas}
    Let $\cE_{\mathbf{e}}:\cS(K)\to S$ defined as $\cE_{\mathbf{e}}(\cdot):=\sum_{j\in J}\langle e_j,\cdot\rangle\ketbra{j}{j}$ be a measurement channel corresponding to the measurement $\mathbf{e}=\{e_j\}_{j\in J}$.
    We have
    \begin{equation}
        D^\epsilon_{\mathrm{H,G}}(\rho||\sigma)\ge D^\epsilon_{\mathrm{H,G}}(\cE_{\mathbf{e}}(\rho)||\cE_{\mathbf{e}}(\sigma))\;.\label{eq:DPI_meas}
    \end{equation}
\end{lemma}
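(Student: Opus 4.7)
The plan is to exploit the adjoint map $\cE^\dagger_{\mathbf{e}}$ constructed in the preceding section, which sends valid effects on the classical output system back to valid effects on $K$ and satisfies the duality relation \eqref{eq:adjoint}. The idea is that any effect feasible for the optimization defining $D^\epsilon_{\rH,\rG}(\cE_{\mathbf{e}}(\rho)\|\cE_{\mathbf{e}}(\sigma))$ can be pulled back through $\cE^\dagger_{\mathbf{e}}$ to a feasible effect for the optimization defining $D^\epsilon_{\rH,\rG}(\rho\|\sigma)$, without increasing the type-II error $\langle\,\cdot\,,\sigma\rangle$.

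First, I fix an arbitrary $f\in\mathbb{R}^n_+$ attaining (or approaching) the minimum in
\[
\min_{\substack{f:\,0\le f\le \mathbf{u},\\\langle f,\cE_{\mathbf{e}}(\rho)\rangle\ge 1-\epsilon}}\langle f,\cE_{\mathbf{e}}(\sigma)\rangle ,
\]
so that $0\le f\le \mathbf{u}$ and $\langle f,\cE_{\mathbf{e}}(\rho)\rangle\ge 1-\epsilon$. I then define the pullback effect $q:=\cE^\dagger_{\mathbf{e}}(f)\in K^*$. By the argument stated just after \eqref{eq:adjoint}, the inequality $0\le f\le \mathbf{u}$ implies $0\le q\le u$, so $q$ is a legitimate candidate effect in the minimization defining $D^\epsilon_{\rH,\rG}(\rho\|\sigma)$.

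Second, I apply the adjoint identity \eqref{eq:adjoint} twice. For the feasibility constraint I get $\langle q,\rho\rangle=\langle\cE^\dagger_{\mathbf{e}}(f),\rho\rangle=\langle f,\cE_{\mathbf{e}}(\rho)\rangle\ge 1-\epsilon$, so $q$ satisfies the type-I constraint for the left-hand side optimization. For the objective function I get $\langle q,\sigma\rangle=\langle f,\cE_{\mathbf{e}}(\sigma)\rangle$. Hence the feasible set on the GPT side contains a point with exactly the same type-II error as $f$, so
\[
\min_{\substack{q:\,0\le q\le u,\\\langle q,\rho\rangle\ge 1-\epsilon}}\langle q,\sigma\rangle\;\le\;\min_{\substack{f:\,0\le f\le \mathbf{u},\\\langle f,\cE_{\mathbf{e}}(\rho)\rangle\ge 1-\epsilon}}\langle f,\cE_{\mathbf{e}}(\sigma)\rangle .
\]
Taking $-\log_2$ of both sides reverses the inequality and yields \eqref{eq:DPI_meas}.

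There is no real obstacle here: the only nontrivial ingredient is that $\cE^\dagger_{\mathbf{e}}$ maps classical effects to valid GPT effects while preserving pairings with states, and both facts were already established in the GPT preliminaries. The lemma is essentially a direct transcription of the standard quantum data-processing argument into the GPT language, with the adjoint channel $\cE^\dagger_{\mathbf{e}}$ playing the role of the Heisenberg-picture pullback.
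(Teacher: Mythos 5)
Your proof is correct and follows essentially the same route as the paper's: take an optimal classical effect, pull it back through the adjoint map $\cE^\dagger_{\mathbf{e}}$, and use the duality relation \eqref{eq:adjoint} to verify both the type-I feasibility constraint and the preservation of the type-II error. The only (harmless) difference is that you explicitly restate the verification that $0\le f\le \mathbf{u}$ implies $0\le\cE^\dagger_{\mathbf{e}}(f)\le u$, which the paper relegates to the remark following \eqref{eq:adjoint}.
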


\section{One-shot classical capacity in GPTs}\label{section:capacity}
Here we consider one-shot classical capacity in GPTs based on the setup given by Ref.~\cite{wang2012one}.
First, we describe our setup of one-shot classical information transmission from the sender in the system $\rA$ to the receiver in the system $\rB$.

The sender and receiver share a channel $\Phi$ from  $\cX$ to $\cS(K)$ defined as $\Phi(\ketbra{x}{x})=\sigma^\rB_x$,
where $\cX$ is an alphabet.
The sender encodes an $n$-length bit string $j\in\Gamma:=\{0,1,2,\cdots,2^{n}-1\}$ to $x\in\cX$ by using a function $g(j)=x$ called \textit{encoder}.
Also, the set $\cG=g(\Gamma)$ and the element $g(j)$ are called \textit{codebook} and \textit{codeword}, respectively.
The receiver performs a measurement $\mathbf{m}^\rB:=\{m^\rB_j\}_{j\in\Gamma}$ to the arrived state $\sigma^\rB_{g(j)}$, where $m^\rB_j\ge 0$ and $\sum_{j\in\Gamma}m^\rB_j=u$.
The error probability for a given message $j\in\Gamma$, encoder $g$ and measurement $\mathbf{m}^\rB$ is defined as
\begin{equation}
    \Pr(\mathrm{error}|j,g,\mathbf{m}^\rB)=\langle u- m^\rB_j,\sigma^\rB_{g(j)}\rangle\;.
\end{equation}
This setting is illustrated in Fig.~\ref{fig:setting}.

\begin{figure}[tbp]
    \centering
    \includegraphics[width=\columnwidth]{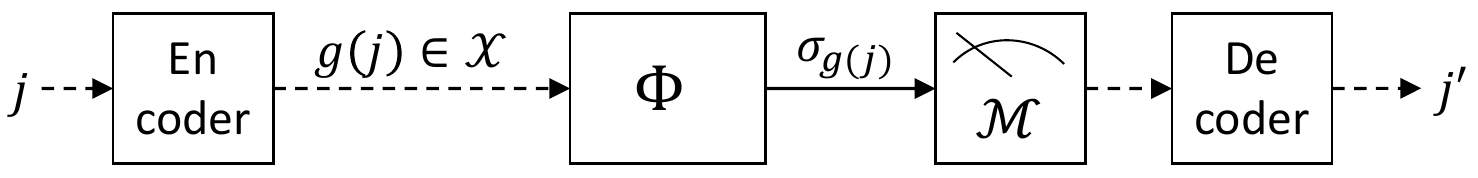}
    \caption{The setup of sending classical information. The dotted arrows mean transmissions of classical information.
    The solid line means a transmission of a state in the general theory.
    The sender chooses the message $j\in\Gamma$ and encodes it by a function $g:\Gamma\to\cX$. 
    Classical information $g(j)$ is transformed into the state $\sigma_{g(j)}$ by a channel $\Phi$ whose input is the classical information and whose output is the state of the GPT of the receiver's system.
    Then the receiver performs a measurement $\cM$ to $\sigma_{g(j)}$ and decodes that result to obtain classical information $j'$. We say that the measurement decodes the message $j$ correctly when $j'=j$.}
    \label{fig:setting}
\end{figure}

The sender and receiver aim to maximize the size of bit strings under the condition that the average error is small enough.
In order to define the rate of this task and the capacity of the channel, we define a code that fulfills this aim.
\begin{definition}[$(2^n,\epsilon)$-code]
    Let $\Gamma=\{0,1,\dots,2^n-1\}$ be a $n$-length bit string.
    A $(2^n,\epsilon)$-code for a map $\Phi:\ketbra{x}{x}\mapsto\sigma^\rB_x$ consists of an encoder $g:\Gamma\to\cX$ and decoding measurement $\mathbf{m}^\rB:=\{m^\rB_j\}_{j\in\Gamma}$ whose average error probability when the messages $j\in\Gamma$ is chosen uniformly at random is bounded from the above by $\epsilon$, in a formula,
    \begin{equation}
        \Pr(\mathrm{error}|g,\mathbf{m}^\rB):=\frac{1}{2^n}\sum_{j\in\Gamma}\Pr(\mathrm{error}|j,g,\mathbf{m}^\rB)\le\epsilon\;.
    \end{equation}
\end{definition}
Then, we define the rate and capacity in the same way as the definition in quantum case~\cite{wang2012one,datta2013smooth}.

\begin{definition}[one-shot $\epsilon$-achievable rate]
    A real number $R\ge 0$ is a one-shot $\epsilon$-achievable rate for one-shot classical information transmission through $\Phi$ if there is a $(2^R,\epsilon)$-code.
\end{definition}
\begin{definition}[one-shot $\epsilon$-classical capacity]
    The one-shot $\epsilon$-classical capacity of a map $\Phi$, $\rC^\epsilon(\Phi)$ is defined as
    \begin{equation}\label{eq:capacity}
       \rC^\epsilon(\Phi):=\sup\{R\mid \mbox{$R$ is a one-shot $\epsilon$-achievable rate}\}\;.
    \end{equation}
\end{definition}

Now, we define the following ensemble:
\begin{align}
    \pi^{\rA\rB}_{P_X}:=\sum_{x\in\cX}P_X(x)\ketbra{x}{x}^\rA\otimes\sigma^\rB_x\;,\label{eq:pi^AB}
\end{align}
where $P_X(x)$ is a probability distribution of a random variable associated with the alphabet $\cX$.
The marginal states with respect to $\rA$ and $\rB$ are 
\begin{align}
    \pi^\rA_{P_X}&=\sum_{x\in\cX}P_X(x)\ketbra{x}{x}^\rA\;,\\
    \pi^\rB_{P_X}&=\sum_{x\in\cX}P_X(x)\sigma^\rB_x\;,
\end{align}
respectively.

In quantum theory, the $\epsilon$-one-shot classical capacity is asymptotically equivalent to the optimal hypothesis testing relative entropy between the above ensemble $\pi^{\rA\rB}$ and the product of its marginal states.
This paper shows that the equivalence also holds even in GPTs.

Firstly, we show the converse part, i.e., the upper bound of $\rC^\epsilon(\Phi)$
by applying Lemma~\ref{lemma:dpi_meas}.
\begin{theorem}\label{theorem:converse}
The $\epsilon$-one-shot classical capacity
of a map $\Phi:\ketbra{x}{x}\mapsto\sigma^\rB_x$ is bounded as follows:
\begin{equation}
    \rC^\epsilon(\Phi)\le \sup_{P_X} D^\epsilon_{\rH,\rG}(\pi^{\rA\rB}_{P_X}||\pi^{\rA}_{P_X}\otimes\pi^{\rB}_{P_X})\;,\label{eq:converse}
\end{equation}
where the supremum is taken over all probability distribution $P_X$,
and where $\pi^{\rA}_{P_X}$ and $\pi^{\rB}_{P_X}$ is a marginal state of $\pi^{\rA\rB}_{P_X}$ with regard to system $\rA$ and $\rB$, respectively.
\end{theorem}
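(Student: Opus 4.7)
The plan is to generalize the Wang--Renner converse argument to GPTs, using Lemma~\ref{lemma:dpi_meas} to reduce the problem to classical hypothesis testing. Concretely, fix any $(2^R,\epsilon)$-code $(g,\mathbf{m}^{\rB})$ for $\Phi$; I will exhibit a distribution $P_X$ together with a feasible hypothesis-testing effect that achieves type-II error at most $2^{-R}$ with type-I error at most $\epsilon$, giving $R\le D^{\epsilon}_{\rH,\rG}(\pi^{\rA\rB}_{P_X}\|\pi^{\rA}_{P_X}\otimes\pi^{\rB}_{P_X})$, hence the supremum bound. Assuming without loss of generality that $g$ is injective (otherwise one absorbs the preimage counts into $P_X$), I take $P_X$ to be uniform on the codebook $\cG=g(\Gamma)$, so that $P_X(x)=2^{-R}$ for $x\in\cG$.

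The key step is to push the problem down to a classical state space via Lemma~\ref{lemma:dpi_meas}. On the composite system $\rA\rB$ consider the family $\bm{e}^{\rA\rB}:=\{\ketbra{x}{x}^{\rA}\otimes m_j^{\rB}\}_{(x,j)\in\cX\times\Gamma}$. Each $\ketbra{x}{x}^{\rA}\otimes m_j^{\rB}$ lies in the dual of $\mathbb{R}^{|\cX|}_+\otimes K$, and $\sum_{x,j}\ketbra{x}{x}^{\rA}\otimes m_j^{\rB}=u^{\rA}\otimes u^{\rB}$, so $\bm{e}^{\rA\rB}$ is a bona fide measurement on the composite cone constructed in Section~\ref{section:GPTs}. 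Applying Lemma~\ref{lemma:dpi_meas} to the corresponding measurement channel $\cE_{\bm{e}^{\rA\rB}}$ gives
\begin{equation*}
D^{\epsilon}_{\rH,\rG}(\pi^{\rA\rB}_{P_X}\|\pi^{\rA}_{P_X}\otimes\pi^{\rB}_{P_X})\ge D^{\epsilon}_{\rH,\rG}\!\bigl(\cE_{\bm{e}^{\rA\rB}}(\pi^{\rA\rB}_{P_X})\,\big\|\,\cE_{\bm{e}^{\rA\rB}}(\pi^{\rA}_{P_X}\otimes\pi^{\rB}_{P_X})\bigr),
\end{equation*}
and both sides involve only classical distributions indexed by $(x,j)$, namely $P_X(x)\langle m_j^{\rB},\sigma_x^{\rB}\rangle$ versus $P_X(x)\langle m_j^{\rB},\pi^{\rB}_{P_X}\rangle$.

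On the classical side I choose the natural test effect $q=\sum_{j\in\Gamma}\ketbra{g(j)}{g(j)}\otimes\ketbra{j}{j}$, which is trivially bounded by the unit. Its value on the first distribution is $\frac{1}{2^R}\sum_{j}\langle m_j^{\rB},\sigma_{g(j)}^{\rB}\rangle=1-\Pr(\mathrm{error}|g,\mathbf{m}^{\rB})\ge 1-\epsilon$, while its value on the product distribution is $\frac{1}{2^R}\sum_{j}\langle m_j^{\rB},\pi^{\rB}_{P_X}\rangle=\frac{1}{2^R}\langle u^{\rB},\pi^{\rB}_{P_X}\rangle=2^{-R}$, using $\sum_j m_j^{\rB}=u^{\rB}$. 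Hence $D^{\epsilon}_{\rH,\rG}(\pi^{\rA\rB}_{P_X}\|\pi^{\rA}_{P_X}\otimes\pi^{\rB}_{P_X})\ge -\log_2 2^{-R}=R$, and taking the supremum over achievable $R$ and the supremum over $P_X$ yields \eqref{eq:converse}. The main obstacle I anticipate is the composite-cone bookkeeping: verifying that $\bm{e}^{\rA\rB}$ really is a measurement on $\mathbb{R}^{|\cX|}_+\otimes K$ (so that Lemma~\ref{lemma:dpi_meas} applies) and handling non-injective encoders cleanly, since in a general GPT one cannot appeal to the standard quantum-style purification arguments.
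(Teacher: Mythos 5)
Your proposal is correct and follows essentially the same route as the paper's proof: restrict to the uniform distribution on the codebook, apply the data-processing inequality (Lemma~\ref{lemma:dpi_meas}) to the product measurement formed by a classical read-out on $\rA$ and the decoding measurement on $\rB$, and then test the resulting joint classical distribution against the product of its marginals with the diagonal effect, yielding type-I error at most $\epsilon$ and type-II error exactly $2^{-R}$. The only cosmetic differences are that the paper indexes the $\rA$-side measurement by messages via effects $\lambda_j$ with $\langle\lambda_j,\ketbra{g(j')}{g(j')}\rangle=\delta_{j,j'}$ rather than by all of $\cX$, and that you make explicit the injectivity caveat that the paper leaves implicit.
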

We give a proof of Theorem~\ref{theorem:converse} in Appendix.

Next, we show the achievable part, which gives the lower bound of $\rC^\epsilon(\Phi)$.
\begin{theorem}\label{theorem:achievability}
The $\epsilon$-one-shot classical capacity of a map $\Phi:x\in\cX\to\sigma^\rB_x$ satisfies the following inequality
for any $\epsilon'\in(0,\epsilon)$, any $s>1$ , and any $t>s$ satisfying $\epsilon>s\epsilon'$:
\begin{equation}\label{eq:achievability}
    \rC^\epsilon(\Phi)\ge \sup_{P_X} D^{\epsilon'}_{\rH,\rG}(\pi^{\rA\rB}_{P_X}||\pi^{\rA}_{P_X}\otimes\pi^{\rB}_{P_X})-\log_2\frac{t}{\epsilon-s\epsilon'}\;.
\end{equation}
\end{theorem}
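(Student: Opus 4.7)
The plan is to mimic Wang and Renner's random-coding proof but adapted to the GPT setting. First, I would extract an optimal test: let $P_X$ attain the supremum in the right-hand side and set $D:=D^{\epsilon'}_{\rH,\rG}(\pi^{\rA\rB}_{P_X}\|\pi^{\rA}_{P_X}\otimes\pi^{\rB}_{P_X})$. Since system $\rA$ is classical, any effect on the composite system decomposes along the simplex basis, so the optimal test has the form $q=\sum_{x}\ketbra{x}{x}^{\rA}\otimes q_x^{\rB}$ with each $q_x\in K^{\ast}$ satisfying $0\le q_x\le u$, and the two defining inequalities of hypothesis testing relative entropy reduce to $\sum_{x}P_X(x)\langle q_x,\sigma_x^{\rB}\rangle\ge 1-\epsilon'$ and $\sum_{x}P_X(x)\langle q_x,\pi^{\rB}_{P_X}\rangle\le 2^{-D}$. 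This effectively splits the joint test into a collection of effects on $\rB$ indexed by the alphabet.

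Second, I would draw codewords $X_1,\dots,X_M$ i.i.d.\ from $P_X$ with $M=\lceil(\epsilon-s\epsilon')\,2^{D}/t\rceil$, and build a decoding measurement $\{m_j\}_{j=0}^{M}$ on system $\rB$ from the effects $\{q_{X_j}\}$, targeting a Hayashi--Nagaoka-type operator inequality
\begin{equation*}
u - m_j \le s\,(u-q_{X_j}) + t\sum_{k\ne j} q_{X_k}\quad\text{in }K^{\ast}.
\end{equation*}
Quantumly this is realized by the pretty-good measurement built from operator square roots; in GPTs no such functional calculus exists, so the measurement must be assembled directly inside $K^{\ast}$. I would construct $m_j$ as a $t$-scaled combination of $q_{X_j}$'s (with a compensating subtraction by the remaining codewords and a final rejection effect $m_0=u-\sum_{j\ge 1} m_j$), and verify membership in $K^{\ast}$ and the displayed inequality using only the order structure of the dual cone together with the standing hypotheses $s>1$ and $t>s$.

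Third, once the decoder is in hand, pairing the Hayashi--Nagaoka-type inequality with $\sigma_{X_j}^{\rB}$ yields
\begin{equation*}
\Pr(\mathrm{err}\mid j,g,\mathbf{m})\le s\bigl(1-\langle q_{X_j},\sigma_{X_j}^{\rB}\rangle\bigr)+t\sum_{k\ne j}\langle q_{X_k},\sigma_{X_j}^{\rB}\rangle.
\end{equation*}
Averaging over the random codebook, using independence of $X_j$ and $X_k$ together with the two inequalities on $q$, I obtain $\mathbb{E}[\Pr(\mathrm{err}\mid g,\mathbf{m})]\le s\epsilon'+tM\cdot 2^{-D}\le\epsilon$ for the chosen $M$. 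The probabilistic method then extracts a deterministic $(2^R,\epsilon)$-code with $R=\log_2 M$, giving the claimed lower bound.

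The hard part is the second step: without operator square roots or the quantum Hayashi--Nagaoka inequality, one must construct the decoder from scratch using only the convex/order structure of $K^{\ast}$. The freedom in the constants $s>1$ and $t>s$ is precisely the slack needed so that the construction produces a valid measurement in $K^{\ast}$ while still admitting a Hayashi--Nagaoka-type bound; pinning down this construction is the essential step beyond the quantum argument.
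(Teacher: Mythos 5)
Your overall architecture matches the paper's proof exactly: take the optimal effect $e$ for $D^{\epsilon'}_{\rH,\rG}(\pi^{\rA\rB}_{P_X}\|\pi^{\rA}_{P_X}\otimes\pi^{\rB}_{P_X})$, restrict it to per-letter effects $A_x$ on system $\rB$ (the paper does this via $\langle A_x,\rho^{\rB}\rangle:=\langle e,\ketbra{x}{x}^{\rA}\otimes\rho^{\rB}\rangle$ rather than by claiming a decomposition of $e$, which sidesteps any issue about the structure of the composite dual cone), draw codewords i.i.d.\ from $P_X$, impose the Hayashi--Nagaoka-type inequality $u-m_j\le s(u-A_{g(j)})+t\sum_{i\neq j}A_{g(i)}$, average over codebooks using $\langle e,\pi^{\rA\rB}\rangle\ge 1-\epsilon'$ and $\langle e,\pi^{\rA}\otimes\pi^{\rB}\rangle=2^{-D}$, and derandomize. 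Your bookkeeping in steps one and three is correct and is essentially identical to the paper's.

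However, there is a genuine gap at precisely the point you flag as ``the hard part'': you never actually construct a measurement in $\cM(K)$ satisfying the Hayashi--Nagaoka-type inequality. The sketch you offer --- building $m_j$ as ``a $t$-scaled combination of $q_{X_j}$'s with a compensating subtraction by the remaining codewords'' --- is the natural guess imported from the quantum pretty-good measurement, but in a general positive cone differences of effects need not lie in $K^{\ast}$, and nothing in the hypotheses $s>1$, $t>s$ rescues such a subtraction. The paper's resolution (Lemma~\ref{lemma:good-measurement}) is structurally different and is the one genuinely new idea of the proof: setting $B_y:=-(s-1)u-tu'+(s+t)A_y$ with $u':=\sum_y A_y$, one shows that \emph{at most one} index $y_0$ can satisfy $B_{y_0}\ge 0$ (two such indices would force $(s+t)u'>(s+t)u'$ using $A_{y_0}+A_{y_1}\le u'$ and $s>1$, $t>s$), and then the \emph{trivial deterministic measurement} $E_{y_0}=u$, $E_y=0$ for $y\neq y_0$ already satisfies $B_y\le E_y$ for all $y$, i.e., the desired inequality. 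No combination or subtraction of the $A_y$'s is ever formed, so membership in $K^{\ast}$ is automatic. Without this lemma (or some substitute), your proof does not close; with it, everything else you wrote goes through as stated.
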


To show Theorem~\ref{theorem:achievability}, we give the following lemma.
\begin{lemma}\label{lemma:good-measurement}
    Let $K$ be a positive cone and let $\cY$ be a finite set.
    Let $\{A_y\}_{y\in \cY}$ be a family of effects in $K^*$ satisfying $0\le A_y\le u$ for all $y\in \cY$.
    Then, for any real numbers $s>1$ and $t>s$, there is a measurement $\{E_y\}_{y\in \cY}\in\cM(K)$ that satisfies the following inequalities for each $E_y$:
    \begin{equation}\label{eq:good-measurement}
        u-E_y\le s(u-A_y)+t\sum_{\substack{z\in \cY\\z\neq y}}A_z\;.
    \end{equation}
\end{lemma}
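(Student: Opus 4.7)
The plan is to translate the desired inequality into a pointwise (statewise) form, exploit the gap $t>s>1$ to obtain a tight ``budget'' on each state, and then produce the measurement by a scaled-mixture construction, falling back on a convex-duality argument to handle the overflow regime.

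First, I would equivalently rephrase the conclusion as: find $\{E_y\}\subset K^*$ with $E_y\ge 0$, $\sum_y E_y=u$, and $\langle E_y,\rho\rangle\ge\langle F_y,\rho\rangle$ for every $\rho\in\cS(K)$, where $F_y:=(1-s)u+sA_y-t\sum_{z\ne y}A_z$. Denoting $a_y:=\langle A_y,\rho\rangle$ and $B_\rho:=\sum_z a_z$, the condition $\langle F_y,\rho\rangle>0$ rewrites as $a_y>(s-1+tB_\rho)/(s+t)$. I would then show that at most one index $y$ can satisfy this at a fixed $\rho$: if both $a_y$ and $a_{y'}$ exceeded this threshold, then $a_y+a_{y'}>2(s-1+tB_\rho)/(s+t)$ while $a_y+a_{y'}\le B_\rho$, giving $(s-t)B_\rho>2(s-1)$, which is impossible for $t>s\ge 1$. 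Summing the positive parts therefore yields $\sum_y\max(0,\langle F_y,\rho\rangle)\le 1=\langle u,\rho\rangle$ for every state $\rho$.

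With this pointwise budget in hand, I would try the explicit candidate $E_y:=A_y/\lambda+(u-B/\lambda)/|\cY|$, where $B:=\sum_z A_z$ and $\lambda:=\max\bigl(1,\inf\{\mu>0:B\le\mu u\}\bigr)$. This is manifestly in $\cM(K)$. In the no-overflow regime $\lambda=1$ (equivalently $B\le u$), a direct expansion gives $u-E_y=(1-1/|\cY|)(u-A_y)+(1/|\cY|)\sum_{z\ne y}A_z$, which is dominated by $s(u-A_y)+t\sum_{z\ne y}A_z$ as soon as $s,t\ge 1$.

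The main obstacle is the overflow regime $\lambda>1$: a uniform fill can undershoot $F_y$ at states on which one $A_y$ saturates $u$ while $B$ exceeds $u$ elsewhere. I would close this gap either by distributing the slack $u-B/\lambda$ with nonuniform weights adapted to the unique dominant index (to absorb the single positive $F_y$ at each state), or by invoking a Farkas-type duality argument in which the pointwise budget $\sum_y\max(0,\langle F_y,\rho\rangle)\le 1$ rules out any infeasibility certificate for the linear problem ``find $E_y\in K^*$ with $E_y\ge F_y$ and $\sum_y E_y=u$''. Either route returns a measurement that, transposed back, yields $u-E_y\le s(u-A_y)+t\sum_{z\ne y}A_z$ in the cone order.
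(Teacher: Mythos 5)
Your reduction to finding $E_y\ge F_y$ with $F_y=(1-s)u+sA_y-t\sum_{z\ne y}A_z$ is exactly the paper's reformulation (its $B_y$), and your statewise argument that at most one index can satisfy $\langle F_y,\rho\rangle>0$ at any fixed state, together with the budget $\sum_y\max(0,\langle F_y,\rho\rangle)\le1$, is correct; it is a pointwise refinement of the paper's observation that at most one $y_0$ can satisfy $F_{y_0}\ge0$ in the cone order on $K^*$. Your explicit candidate also works in the regime $\sum_y A_y\le u$. The paper itself takes a much more degenerate route: it selects the (at most one) $y_0$ with $F_{y_0}\ge0$ and sets $E_{y_0}=u$, $E_y=0$ otherwise, so no fill or normalization is needed.

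However, your proof is not complete, and the missing part is the crux. In the intended application the $A_y$ are $2^R$ randomly drawn effects, so $\sum_y A_y\le u$ essentially never holds; the ``overflow regime'' is the entire content of the lemma, and neither fallback closes it as stated. (i) ``Nonuniform weights adapted to the unique dominant index'' cannot work because the dominant index depends on the state: already for a classical bit with $A_1=(1,0)$, $A_2=(0,1)$ one has $\langle F_1,\rho\rangle>0$ at $\rho=(1,0)$ and $\langle F_2,\rho\rangle>0$ at $\rho=(0,1)$, so there is no single globally dominant $y$ to privilege. (ii) The Farkas route does not follow from your budget: by conic duality, feasibility of $\{E_y\in K^*,\ E_y-F_y\in K^*,\ \sum_y E_y=u\}$ is equivalent to $\sum_y\sup\{\langle F_y,\sigma\rangle:\sigma\in K,\ \sigma\le\rho\}\le 1$ for every state $\rho$, and when $K^*$ is not a lattice the quantity $\sup_{0\le\sigma\le\rho}\langle F_y,\sigma\rangle$ can strictly exceed $\max(0,\langle F_y,\rho\rangle)$. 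Your pointwise budget is therefore a necessary but not manifestly sufficient certificate, and the gap between the two is exactly where the non-classicality of a general $K$ bites. To finish, you would either have to verify the stronger dual condition or switch to a single-index construction of the paper's type --- and note that any such construction must establish $F_y\le 0$ in the cone order for every non-dominant $y$, which is strictly more than $F_y\not\ge 0$; your own statewise analysis is precisely what exposes this distinction.
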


We give proofs of Theorem~\ref{theorem:achievability} and Lemma~\ref{lemma:good-measurement} in Appendix.
Simply speaking, an achievable code is obtained by Lemma~\ref{lemma:good-measurement} associated with the achievable measurement of hypothesis testing relative entropy.

Here, we remark on the construction of the measurement in the proofs of Lemma~\ref{lemma:good-measurement} and Theorem~\ref{theorem:achievability}.
In classical-quantum channel coding~\cite{holevo1998capacity,schumacher1997sending,wang2012one}, the measurement is chosen as a \textit{pretty good measurement}~\cite{hausladen1994pretty} determined by the square roots of the original family $\{A_y\}_{y\in \cY}$.
On the other hand, our construction is trivial, but it needs the information of the unique index $y_0\in\cY$ satisfying a certain property.
The construction in Refs.~\cite{holevo1998capacity,schumacher1997sending,wang2012one} is valid without such information, and therefore, we can apply the construction in \cite{holevo1998capacity,schumacher1997sending,wang2012one} to actual information tasks in contrast to our construction.
However, this paper aims to estimate the value $\rC^\epsilon(\Phi)$.
For this aim, we only have to show the existence of an optimal measurement, and thus, our construction is sufficient even though our method is not helpful for actual information tasks.

\section{Asymptotic i.i.d. case}\label{section:asymptotic}
In this section, we consider how the capacity is expressed when a channel is used $m$ ($m$ is a positive integer) times in an independent and identical distribution (i.i.d.).

We express $m$-length sequence consists of alphabet $\cX$ as $x_1\dots x_m$.
Let us fix the probability of occurrence for each symbol as $P_X(x)$.
Then, when a channel $\Phi$ is used $m$ times, the sender and receiver can share the following state:
\begin{equation}
\begin{split}
    \pi^{AB}_{P_{X^m}}&:=\sum_{x_1\dots x_m\in\cX^{m}}P_{X^m}(x_1\dots x_m)\ketbra{x_1\dots x_m}{x_1\dots x_m}^\rA\\
    &\qquad\otimes\sigma^\rB_{x_1\dots x_m}\;.
\end{split}
\end{equation}
Here, $\cX^m$ means the set of all $m$-length sequences consist of the alphabet $\cX$ and $\sigma^\rB_{x_1\dots x_m}$ is the abbreviation of $\sigma^\rB_{x_1}\otimes\sigma^\rB_{x_2}\otimes\cdots\otimes\sigma^\rB_{x_m}$.
We denote the above map from $\ketbra{x_1\dots x_m}{x_1\dots x_m}^\rA$ to $\sigma^\rB_{x_1\dots x_m}$ as $\Phi^{\otimes m}$.

Here, we remark on the composition of the model of GPTs.
In the standard setting of GPTs, i.e., the case when we assume no-signaling and local tomography \cite{barrett2007information,janotta2013generalized},
an $n$-composite model of a model defined by $K$ is defined by a positive cone $K_n$ satisfying
\begin{align}\label{eq:composition}
    \bigotimes_{i=1}^n K_i \subset K_n \subset \left(\bigotimes_{i=1}^n K_i^\ast \right)^\ast\;,
\end{align}
where the set $\bigotimes_{i=1}^n K_i$ is defined as
\begin{align}
    \bigotimes_{i=1}^n K_i:=\mathrm{Conv}\{\otimes \rho_i| \rho_i\in K_i\}\;.
\end{align}
As shown in Refs.~\cite{aubrun2021entangleability,aubrun2022entanglement}, an $n$-composite model of a non-classical single model is not uniquely determined in GPTs.
Therefore, we need to be more careful in GPTs when we consider an asymptotic scenario.
However, in the above asymptotic scenario, we only need to consider $m$-uses of a channel $\Phi$, which is a channel from a classical $m$-length bit to an $m$-partite product state $\sigma^\rB_{x_1\dots x_m}$.
Due to the inclusion relation \eqref{eq:composition}, an $m$-partite product state can be regarded as a state in any composite model of a single system,
and therefore, the map $\Phi^{\otimes n}$ is well-defined even in GPTs.
Hence, we can apply the results in the single-shot scenario to the asymptotic scenario.

Now we consider the situation where we encode a message $j\in\Gamma$ to an $m$-length sequence $x_1\dots x_m$ by an encoder $g_m$, that is, $g_m(j)=x_1\dots x_m$.
Here, notice that the size of the set of all messages $\Gamma$ depends on $m$ and we denote it as $|g_m|$.
Also, let us denote the decoding error $\epsilon _m$ when the message $j$ appears uniformly at random.
Then, similar to the single-shot scenario,
we define an $\epsilon$-asymptotic achievable rate as a real number $R\ge 0$ if there exists a sequence of $(m,|g_m|,\epsilon)$-codes satisfying $\liminf_{m\to\infty}\frac{1}{m}\log |g_m|=R$.
Finally, we define $\epsilon$-asymptotic classical capacity following~\cite{verdu1994general}.

\begin{definition}[$\epsilon$-asymptotic classical capacity]
    The $\epsilon$-asymptotic classical capacity of $\Phi$ is defined as follows:
\begin{equation}
    \tilde{\rC}^\epsilon(\Phi):=\sup\left\{R\mid\mbox{$R$ is $\epsilon$-achievable rate for $\Phi$}\right\}\;.
\end{equation}
\end{definition}
By definitions of one-shot $\epsilon$-classical capacity and $\epsilon$-classical capacity, we have
\begin{equation}
    \tilde{\rC}^\epsilon(\Phi)=\liminf_{m\to\infty}\frac{1}{m}\rC^\epsilon(\Phi^{\otimes m})\;.
\end{equation}

Because the map $\Phi^{\otimes m}$ can be regarded as a channel in a model of GPTs,
we can apply Theorem~\ref{theorem:converse}.
Also, in the proof of Theorem~\ref{theorem:achievability},
the decoding measurement is a trivial measurement with information of a given channel.
A trivial measurement is also well-defined in any composite system of the single system,
and therefore, we can also Theorem~\ref{theorem:achievability}.
As a result,
the $\epsilon$-asymptotic classical capacity of $\Phi$ satisfies follows:
\begin{equation}
    \begin{split}
        &\lim_{m\to\infty}\frac{1}{m}\sup_{P_{X^m}} D^{\epsilon'}_{\rH,\rG}(\pi^{\rA\rB}_{P_{X^m}}||\pi^{\rA}_{P_{X^m}}\otimes\pi^{\rB}_{P_{X^m}})\\
        &\le
        \tilde{\rC}^\epsilon(\Phi)\\
        &\le\lim_{m\to\infty}\frac{1}{m}\sup_{P_{X^m}}D^\epsilon_{\rH,\rG}(\pi^{\rA\rB}_{P_{X^m}}\|\pi^\rA_{P_{X^m}}\otimes\pi^{\rB}_{P_{X^m}})\;,
    \end{split}
\end{equation}
where, $\epsilon'\in(0,\epsilon)$.
Here, the hypothesis testing relative entropy $D^{\epsilon}(\rho||\sigma)$ is left-continuous over $\epsilon$ (Lemma~\ref{lemma:continuous} in Appendix).
As a result, we obtain the following theorem.
\begin{theorem}\label{theorem:asymptotic}
    In any model of GPTs and any $\epsilon\in(0,1)$,
    a channel $\Phi$ satisfies
    \begin{equation}
        \tilde{\rC}^\epsilon(\Phi)=\lim_{m\to\infty}\frac{1}{m}\sup_{P_{X^m}}D^\epsilon_{\rH,\rG}(\pi^{\rA\rB}_{P_{X^m}}\|\pi^\rA_{P_{X^m}}\otimes\pi^{\rB}_{P_{X^m}})\;.
    \end{equation}
\end{theorem}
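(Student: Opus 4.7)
The plan is to apply the one-shot converse (Theorem~\ref{theorem:converse}) and achievability (Theorem~\ref{theorem:achievability}) to the $m$-fold channel $\Phi^{\otimes m}$, divide by $m$, take $\liminf_{m\to\infty}$, and close the remaining gap in the error parameter using left-continuity of $D^\eta_{\rH,\rG}$ in $\eta$.

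First I would verify that $\Phi^{\otimes m}$ is a legitimate channel in every composite model permitted by \eqref{eq:composition}, as already pointed out in the paragraph preceding the theorem: the output $\sigma^\rB_{x_1}\otimes\cdots\otimes\sigma^\rB_{x_m}$ is always a pure tensor, hence lies in the minimal tensor cone $\bigotimes_{i=1}^m K$, so it belongs to every admissible composite cone. Consequently both one-shot theorems apply to $\Phi^{\otimes m}$. Abbreviating
\begin{equation}
    F_m(\eta):=\frac{1}{m}\sup_{P_{X^m}} D^{\eta}_{\rH,\rG}(\pi^{\rA\rB}_{P_{X^m}}\|\pi^\rA_{P_{X^m}}\otimes\pi^{\rB}_{P_{X^m}})\;,
\end{equation}
Theorem~\ref{theorem:converse} gives $\frac{1}{m}\rC^\epsilon(\Phi^{\otimes m})\le F_m(\epsilon)$, and for any admissible triple $(\epsilon',s,t)$ with $s\epsilon'<\epsilon$ Theorem~\ref{theorem:achievability} gives $\frac{1}{m}\rC^\epsilon(\Phi^{\otimes m})\ge F_m(\epsilon')-\frac{1}{m}\log_2\frac{t}{\epsilon-s\epsilon'}$. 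Since the subtracted term is a finite constant independent of $m$, taking $\liminf_{m\to\infty}$ annihilates it and yields the sandwich
\begin{equation}
    \liminf_{m\to\infty}F_m(\epsilon')\;\le\;\tilde{\rC}^\epsilon(\Phi)\;\le\;\liminf_{m\to\infty}F_m(\epsilon)
\end{equation}
for every $\epsilon'\in(0,\epsilon)$.

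The last step is to let $\epsilon'\nearrow\epsilon$ in the lower bound so that it coincides with the upper bound. Each $F_m(\eta)$ is monotone non-decreasing in $\eta$ (enlarging $\eta$ enlarges the feasible set in \eqref{eq:def_HTRE_GPT}) and is a supremum of the left-continuous functions $\eta\mapsto D^{\eta}_{\rH,\rG}(\pi^{\rA\rB}_{P_{X^m}}\|\pi^\rA_{P_{X^m}}\otimes\pi^{\rB}_{P_{X^m}})$ supplied by Lemma~\ref{lemma:continuous}, so $F_m$ is itself left-continuous in $\eta$. The only genuinely delicate point, which I expect to be the main obstacle, is that pointwise left-continuity of $F_m$ need not automatically commute with the outer $\liminf$ in $m$; I would address this by exploiting monotonicity in $\eta$ to choose, for each tolerance $\delta>0$, a single $\epsilon'<\epsilon$ such that $F_m(\epsilon)-F_m(\epsilon')<\delta$ along a subsequence realising the $\liminf$. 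With this uniform-in-$m$ control the two $\liminf$s agree in the limit $\epsilon'\to\epsilon^-$, and the sandwich collapses to the asserted equality.
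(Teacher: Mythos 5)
Your proposal follows the paper's proof of Theorem~\ref{theorem:asymptotic} step for step: well-definedness of $\Phi^{\otimes m}$ because its outputs are pure tensors and hence lie in every admissible composite cone of \eqref{eq:composition}; the sandwich obtained by applying Theorem~\ref{theorem:converse} and Theorem~\ref{theorem:achievability} to $\Phi^{\otimes m}$, normalizing by $1/m$, and discarding the $\frac{1}{m}\log_2\frac{t}{\epsilon-s\epsilon'}$ correction in the limit; and finally Lemma~\ref{lemma:continuous} to send $\epsilon'\to\epsilon^-$. To that extent the route is exactly the intended one.

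The one place where you go beyond the paper is also the one place where the argument is not actually secured. You are right that pointwise left-continuity of each $F_m$ need not commute with $\liminf_m$, but the fix you sketch --- a \emph{single} $\epsilon'<\epsilon$ giving $F_m(\epsilon)-F_m(\epsilon')<\delta$ uniformly along a subsequence, ``by monotonicity'' --- is precisely the uniform-in-$m$ estimate that is missing, and monotonicity of each $F_m$ in $\eta$ does not supply it. Concretely, the construction in Lemma~\ref{lemma:continuous} controls the minimal type-II error by $\beta_m(\epsilon')\le\beta_m(\epsilon)+\frac{\epsilon-\epsilon'}{\epsilon}$; writing $\beta_m(\epsilon)=2^{-mF_m(\epsilon)}$ this only yields $F_m(\epsilon')\ge F_m(\epsilon)-\frac{1}{m}\log_2\bigl(1+\frac{\epsilon-\epsilon'}{\epsilon}2^{mF_m(\epsilon)}\bigr)$, and forcing the correction below $\delta$ requires $\epsilon-\epsilon'$ to shrink exponentially in $m$ whenever $F_m(\epsilon)$ stays bounded away from zero. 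So no single $\epsilon'$ works along the whole subsequence, and the most this line of reasoning gives is that the monotone function $\eta\mapsto\liminf_m F_m(\eta)$ is left-continuous at all but countably many $\eta$ --- weaker than the claim ``for any $\epsilon\in(0,1)$.'' You should know that the paper's own proof makes exactly the same silent jump (it invokes Lemma~\ref{lemma:continuous} and immediately concludes), so your write-up is faithful to the source and even more candid about the difficulty; but a complete argument would need either a modulus of left-continuity that survives the $1/m$ normalization uniformly in $m$, or a restriction of the statement to continuity points of the regularized quantity.
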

In other words, the classical capacity and the hypothesis testing relative entropy are asymptotically equivalent even in GPTs.

Here, we remark on the dependence of $\epsilon$.
In quantum theory, because of quantum Stein's lemma \cite{hiai1991proper,ogawa2005strong}, we have the following equality \cite{wang2012one}:
\begin{equation*}
    \forall\epsilon\in(0,1),\quad \lim_{n\rightarrow\infty}\frac{1}{n}D^\epsilon_{\rH}(\rho^{\otimes n}||\sigma^{\otimes n})=D(\rho||\sigma)\;.
\end{equation*}
In other words, there is no $\epsilon$-dependence in quantum theory, and the rates are equal to Umegaki relative entropy $D(\rho||\sigma)$~\cite{umegaki1961on,umegaki1962conditional}.
Thus, both asymptotic classical capacity and asymptotic hypothesis testing relative entropy are determined independently with $\epsilon$.
Therefore, we often consider the case $\epsilon\to 0$, which provides Holevo--Schumacher--Westmoreland (HSW) Theorem~\cite{schumacher1997sending,holevo1998capacity} as stated in Ref.~\cite{wang2012one}.
In contrast to quantum theory, it has not been known yet that they are independent with $\epsilon$ in GPTs, which is still an important open problem.
However, whether they are dependent or independent, this paper clarified the asymptotic equivalence between classical capacity and hypothesis testing relative entropy with each $\epsilon$ even in GPTs.

\section{Conclusions}
We have introduced classical information transmission in GPTs, and we have shown the lower and upper bounds of the one-shot classical capacity theorem in any physical theory given by hypothesis testing relative entropy.
Besides, we have shown that the lower and upper bounds are asymptotically equivalent.
In other words, we have shown the equivalence between asymptotic classical capacity and asymptotic hypothesis testing relative entropy in any physical theory.

In classical and quantum theory, the above rates of different information tasks are connected by entropies, but in general, entropies do not possess similar properties to classical and quantum theory in GPTs.
Our contribution is to clarify the equivalence of two rates of different information tasks without entropies even in any physical theory.

Note that the asymptotic equivalence of the capacity and hypothesis testing relative entropy given by Theorem~\ref{theorem:asymptotic} depends on $\epsilon$.
The open problem is then HSW Theorem in GPTs. By virtue of Theorem~\ref{theorem:asymptotic}, this problem can be reduced to the following two problems:
(1) Are asymptotic classical capacity and asymptotic hypothesis testing relative entropy independent with $\epsilon$ even in GPTs?
(2) If so, are asymptotic classical capacity and asymptotic hypothesis testing relative entropy related to standard relative entropy even in GPTs?
The answer to both problems should give an important new operational perspective of entropies and information rates.

\section*{Acknowledgments}
The authors thank Mark M. Wilde for pointing out errors in the previous version and for helpful comments and discussions.
Helpful comments and discussions from Francesco Buscemi are gratefully acknowledged.
Comments from Tan Van Vu on the previous version of the manuscript are gratefully acknowledged.
S.~M. would like to take this opportunity to thank the “Nagoya University Interdisciplinary Frontier Fellowship” supported by Nagoya University and JST, the establishment of university fellowships towards the creation of science technology innovation, Grant Number JPMJFS2120 and “THERS Make New Standards Program for the Next Generation Researchers” supported by JST SPRING, Grant Number
JPMJSP2125.

\bibliography{myref}

\clearpage

\widetext

\appendix

\section*{Lemmas and Proofs}
\subsection{Proof of Lemma~\ref{lemma:dpi_meas}}\label{appendix:dpi}
\begin{proof}
Let $0\le \overline{q}\le u$ be a classical effect that attains the minimization in the definition of classical hypothesis testing relative entropy $D^\epsilon_{\rH,\rG}(\cE_{\mathbf{e}}(\rho)||\cE_{\mathbf{e}}(\sigma))$.
In other words, $\overline{q}$ satisfies the following relations:
\begin{align}
    \langle \overline{q},\cE_{\mathbf{e}}(\rho)\rangle&\ge 1-\epsilon\;,\label{eq:optimal_effect}\\
    -\log_2 \langle\overline{q},\cE_{\mathbf{e}}(\sigma)\rangle&=D^\epsilon_{\mathrm{H,G}}(\cE_{\mathbf{e}}(\rho)||\cE_{\mathbf{e}}(\sigma))\;.\label{eq:optimal_value}
\end{align}
Because of Eq.~\eqref{eq:adjoint}, we have $\langle \cE^\dagger_{\mathbf{e}}(\overline{q}),\rho\rangle=\langle \overline{q},\cE_{\mathbf{e}}(\rho)\rangle$. 
The combination of this with Eq.~\eqref{eq:optimal_effect} and \eqref{eq:optimal_value} leads to the following relations, respectively:
\begin{align}
    \langle \cE^\dagger_{\mathbf{e}}(\overline{q}),\rho\rangle&\ge 1-\epsilon\;,\label{eq:nonoptimal_effect}\\
    -\log_2 \langle\cE^\dagger_{\mathbf{e}}(\overline{q}),\sigma\rangle&=D^\epsilon_{\mathrm{H,G}}(\cE_{\mathbf{e}}(\rho)||\cE_{\mathbf{e}}(\sigma))\;.
\end{align}
Eq.~\eqref{eq:nonoptimal_effect} implies that the effect $\cE^\dagger_{\mathbf{e}}(\overline{q})$ satisfies the condition of the minimization of $D^\epsilon_{\rH,\rG}(\rho||\sigma)$ (but not necessarily the optimal effect that achieves $D^\epsilon_{\rH,\rG}(\rho||\sigma)$), which implies the desired inequality \eqref{eq:DPI_meas}~\footnote{The argument for this proof of the monotonicity is similar to that proposed in Ref.~\cite{wang2012one}, where monotonicity in general CPTP maps is proven in quantum theory. In GPTs, however, we prove the monotonicity only in measurement channels, which is sufficient for this paper's discussion.}.
\end{proof}

\subsection{Proof of Theorem~\ref{theorem:converse}}

\begin{proof}
Let $R:= \rC^\epsilon(\Phi)$ be the maximum $\epsilon$-achievable rate.
Take an $(2^R,\epsilon)$-code, i.e., a encoder $g:\Gamma\to\cX$ and a measurement $\mathbf{m}^\rB:=\{m_j^{\rB}\}_{j\in\Gamma}$ satisfying
\begin{align}\label{eq:proof-theorem1-assumption}
    \frac{1}{2^R}\sum_{j\in\Gamma}\langle u- m^\rB_j,\sigma^\rB_{g(j)}\rangle\le\epsilon\;.
\end{align}
To show the inequality \eqref{eq:converse},
we consider the concrete probability distribution $P_X$ as a uniform distribution,
and we define the bipartite state $\pi^{\rA\rB}_{\mathrm{uni}}$ as
\begin{align}
    \pi^{\rA\rB}_{\mathrm{uni}}:=\frac{1}{2^R}\sum_{j\in\Gamma}\ketbra{g(j)}{g(j)}^\rA\otimes\sigma^\rB_{g(j)}\;.
\end{align}
The marginal states of $\pi^{\rA\rB}_{\mathrm{uni}}$ is given as
\begin{align}
    \pi^{\rA}_{\mathrm{uni}}&=\frac{1}{2^R}\sum_{j\in\Gamma}\ketbra{g(j)}{g(j)}^\rA\;,\\
    \pi^{\rB}_{\mathrm{uni}}&=\frac{1}{2^R}\sum_{j\in\Gamma}\sigma^\rB_{g(j)}\;.
\end{align}
We will show the inequality \eqref{eq:converse} by showing the inequality
\begin{align}\label{eq:proof-theorem1-1}
    R\le D^\epsilon_{\rH,\rG}(\pi^{\rA\rB}_{\mathrm{uni}}||\pi^{\rA}_{\mathrm{uni}}\otimes\pi^{\rB}_{\mathrm{uni}})
\end{align}
as the latter discussion.

Next, take a classical measurement $\{\lambda_{j}^{\rA}\}_{j\in \Gamma}$ such that $\langle\lambda^\rA_{j},\ketbra{g(j')}{g(j')}\rangle=\delta_{j,j'}$.
Then, we define a product measurement $\mathbf{m}':=\{\lambda_j\otimes m_{j'}\}_{j,j'\in\Gamma}$ and the measurement channel $\cE_{\mathbf{m}'}$ associated with the product measurement, i.e., $\cE_{\mathbf{m}'}$ is given as
\begin{align}
    \cE_{\mathbf{m}'}(\rho)=\sum_{j,j'\in\Gamma}\langle\lambda_j\otimes m_{j'},\rho\rangle\ketbra{j}{j}\otimes\ketbra{j'}{j'}\;.
\end{align}
Due to Lemma~\ref{lemma:dpi_meas},
we obtain the following inequality:
\begin{equation}\label{eq:proof-theorem1-2}
   D^\epsilon_{\rH,\rG}\left(\cE_{\mathbf{m}'}\left(\pi^{\rA\rB}_{\mathrm{uni}}\right)||\cE_{\mathbf{m}'}\left(\pi^{\rA}_{\mathrm{uni}}\otimes\pi^{\rB}_{\mathrm{uni}}\right)\right)\le D^\epsilon_{\rH,\rG}(\pi^{\rA\rB}_{\mathrm{uni}}||\pi^{\rA}_{\mathrm{uni}}\otimes\pi^{\rB}_{\mathrm{uni}})\;.
\end{equation}
To calculate the left-hand side of \eqref{eq:proof-theorem1-2},
we define the classical state $P^{\rA\rB}:=\cE_{\mathbf{m}'}(\pi_\mathrm{uni}^{\rA\rB})$.
Then, the marginal states of $P^{\rA\rB}$ are given as
\begin{align}
        P^{\rA}=&\sum_{j,j'\in\Gamma}\langle\lambda_j\otimes m_{j'},\pi^{\rA\rB}_{\mathrm{uni}}\rangle\ketbra{j}{j}^{\rA}\\
        =&\sum_{j,\in\Gamma}\langle\lambda_j\otimes u,\pi^{\rA\rB}_{\mathrm{uni}}\rangle\ketbra{j}{j}^{\rA}\\
        =&\sum_{j,\in\Gamma}\langle\lambda_j,\pi^{\rA}_{\mathrm{uni}}\rangle\ketbra{j}{j}^{\rA}\;,\\
        P^{\rB}=&\sum_{j,j'\in\Gamma}\langle\lambda_j\otimes m_{j'},\pi^{\rA\rB}_{\mathrm{uni}}\rangle\ketbra{j'}{j'}^{\rB}\\
        =&\sum_{j',\in\Gamma}\langle u\otimes m_{j'},\pi^{\rA\rB}_{\mathrm{uni}}\rangle\ketbra{j'}{j'}^{\rB}\\
        =&\sum_{j',\in\Gamma}\langle m_{j'},\pi^{\rB}_{\mathrm{uni}}\rangle\ketbra{j'}{j'}^{\rB}\;.
\end{align}
By these forms of marginal states $P^{\rA}$ and $P^{\rB}$,
we obtain the following relation:
\begin{align}
    P^{\rA}\otimes P^{\rB}=&\sum_{j,j'\in\Gamma} \langle\lambda_j,\pi^{\rA}_{\mathrm{uni}}\rangle\langle m_{j'},\pi^{\rB}_{\mathrm{uni}}\rangle \ketbra{j}{j}^{\rA}\otimes \ketbra{j'}{j'}^{\rB}\\
    =&\sum_{j,j'\in\Gamma} \langle\lambda_j\otimes m_{j'},\pi^{\rA}_{\mathrm{uni}}\otimes\pi^{\rB}_{\mathrm{uni}}\rangle\ketbra{j}{j}^{\rA}\otimes \ketbra{j'}{j'}^{\rB}\\
    =&\cE_{\mathbf{m}'}\left(\pi^{\rA}_{\mathrm{uni}}\otimes\pi^{\rB}_{\mathrm{uni}}\right)\;.
\end{align}
Therefore, to show the desirable inequality \eqref{eq:proof-theorem1-1}, we need to show the inequality
\begin{align}\label{eq:proof-theorem1-3}
    R\le D^\epsilon_{\rH,\rG}(P^{\rA\rB}||P^{\rA}\otimes P^{\rB}),
\end{align}
which will be shown as follows.

To estimate the value $D^\epsilon_{\rH,\rG}(P^{\rA\rB}||P^{\rA}\otimes P^{\rB})$,
take a concrete classical two-outcome measurement $\{q,u-q\}$ to distinguish classical states $P^{\rA\rB}$ and $P^{\rA}\otimes P^{\rB}$ such that $q=(q_i)_{i\in\Gamma^2}\in\mathbb{R}^{\Gamma^2}$ and $q_i=\delta_{j,j'}$ for $i=(j,j')$.
Then, the type-I error probability $\langle q,P^{\rA\rB}\rangle$ is given as
\begin{align}\label{eq:proof-theorem1-4}
    \langle q,P^{\rA\rB}\rangle=&\sum_{j=j'}\langle\lambda_j\otimes m_{j'},\pi_{\mathrm{uni}}^{\rA\rB}\rangle\\
    =&\sum_{j\in\Gamma}\sum_{k\in\Gamma}\frac{1}{2^R}\langle\lambda_j,\ketbra{g(k)}{g(k)}^{\rA}\rangle\langle m_j,\sigma_{g(k)}^{\rB}\rangle\\
    \stackrel{(a)}{=}&\frac{1}{2^R}\sum_{j\in\Gamma}\langle{m_j,\sigma_{g(j)}}\rangle
    \\
    =&1-\frac{1}{2^R}\sum_{j\in\Gamma}\langle u-m_j,\sigma_{g(j)}\rangle\\
    \stackrel{(b)}{\ge} &1-\epsilon\;.
\end{align}
The equation $(a)$ holds
because $\langle \lambda_j,\ketbra{g(k)}{g(k)}=\delta_{j,k}$.
The inequality $(b)$ holds because of the inequality \eqref{eq:proof-theorem1-assumption}.
Also, the type-II error probability $\langle q,P^{\rA}\otimes P^{\rB}\rangle$ is given as
\begin{align}\label{eq:proof-theorem1-5}
    \langle q, P^{\rA}\otimes P^{\rB}\rangle
    =&\sum_{j=j'}\langle\lambda_j\otimes m_{j'},\pi_{\mathrm{uni}}^{\rA}\otimes\pi_{\mathrm{uni}}^{\rA}\rangle\\
    =&\sum_{j\in\Gamma}\langle \lambda_j \pi_{\mathrm{uni}}^{\rA}\rangle \langle m_j \pi_{\mathrm{uni}}^{\rB}\rangle\\
    =&\sum_{j\in\Gamma}\left(\frac{1}{2^R}\sum_{k\in\Gamma}\langle \lambda_j\ketbra{g(k)}{g(k)}^{\rA}\rangle\right)
    \left(\frac{1}{2^R}\sum_{k'\in\Gamma}\langle m_j,\sigma_{k'}^{\rB}\rangle\right)\\
    \stackrel{(a)}{=}&\frac{1}{2^{2R}}\sum_{j\in\Gamma}\sum_{k'\in\Gamma}\langle m_j,\sigma_{k'}^{\rB}\rangle
    =\frac{1}{2^{2R}}\sum_{k'\in\Gamma}\langle u,\sigma_{k'}^{\rB}\rangle
    =\frac{1}{2^R}\;.
\end{align}
The equation $(a)$ holds because $\langle \lambda_j,\ketbra{g(k)}{g(k)}=\delta_{j,k}$.
The relations \eqref{eq:proof-theorem1-4} and \eqref{eq:proof-theorem1-5} imply
the inequality \eqref{eq:proof-theorem1-3}.
As a result, we obtain the inequality \eqref{eq:converse} as follows:
\begin{align}
    \rC^\epsilon(\Phi)=&R\le D^\epsilon_{\rH,\rG}(P^{\rA\rB}||P^{\rA}\otimes P^{\rB})\\
    =&D^\epsilon_{\rH,\rG}\left(\cE_{\mathbf{m}'}\left(\pi^{\rA\rB}_{\mathrm{uni}}\right)||\cE_{\mathbf{m}'}\left(\pi^{\rA}_{\mathrm{uni}}\otimes\pi^{\rB}_{\mathrm{uni}}\right)\right)\\
    \le&D^\epsilon_{\rH,\rG}\left(\pi^{\rA\rB}_{\mathrm{uni}}||\pi^{\rA}_{\mathrm{uni}}\otimes\pi^{\rB}_{\mathrm{uni}}\right)\\
    \le&\sup_{P_X} D^\epsilon_{\rH,\rG}(\pi^{\rA\rB}_{P_X}||\pi^{\rA}_{P_X}\otimes\pi^{\rB}_{P_X})\;.
\end{align}
\end{proof}

Here, we remark that this proof is an operational generalization of the proof of the converse part given in Ref.~\cite{wang2012one}.

\subsection{Proof of Theorem~\ref{theorem:achievability}}

First, we prove Lemma~\ref{lemma:good-measurement}.
\begin{proof}[Proof of Lemma~\ref{lemma:good-measurement}]
    Let us denote $u':=\sum_{y\in \cY}A_y$.
    Also, let us define
    \begin{equation}
        B_y:=-(s-1)u-tu'+(s+t)A_y\;
    \end{equation}
    for each $y\in \cY$.
    Then the desired inequality \eqref{eq:good-measurement} is rewritten as $B_y\le E_y$.

    Next, we show that there exists at most one element $y_0\in \cY$ satisfying $B_{y_0}\ge 0$ by contradiction.
    Assume that there are two different elements $y_0,y_1\in \cY$ satisfying $B_{y_0}\ge0$ and $B_{y_1}\ge0$, respectively.
    Then we have the following two inequalities:
    \begin{align}
        (s+t)A_{y_0}&\ge (s-1)u+tu'\;,\label{eq:proof_lemma_1}\\
        (s+t)A_{y_1}&\ge (s-1)u+tu'\;.\label{eq:proof_lemma_2}
    \end{align}
    Therefore, we have
    \begin{align}
        (s+t)u'&\ge(s+t)(A_{y_0}+A_{y_1})\\
        &\ge 2(s-1)u+2tu'\\
        &>(s+t)u'\;.
    \end{align}
    The first inequality is from $A_{y_0}+A_{y_1}\le u'$.
    The second inequality is from inequalities \eqref{eq:proof_lemma_1} and \eqref{eq:proof_lemma_2}.
    The final inequality holds because $s>1$ and $t>s$.
    However, this is a contradicting relation, and thus, there exists at most one $y_0$ such that $B_{y_0}\ge 0$.

    Finally, we define a measurement.
    Let $y_0\in\cY$ be a unique element satisfying $B_{y_0}\ge0$ if it exists.
    If there does not exist such $y_0$, we choose a fixed element $y_0\in\cY$.
    Then, define a measurement $\{E_y\}_{y\in \cY}$ as
     \begin{align}
        \begin{aligned}\label{def:good-measurement}
            \begin{cases}
                E_y:=0 & (y\neq y_0)\;,\\
                E_y:=u & (y=y_0)\;.
            \end{cases}
        \end{aligned}    
    \end{align}
    As discussed above, $B_i<0$ holds for all $i$ except $y_0$ and therefore, $B_i<E_i$ holds for any $i\neq y_0$,.
    Therefore, what we need to show is $B_{y_0}\le E_{y_0}$, which is shown as follows:
    \begin{align}
        E_{y_0}-B_{y_0}&=u-[-(s-1)u-tu'+(s+t)A_{y_0}]\\
        &=s(u-A_{y_0})+t(u'-A_{y_0})\ge 0\;.
    \end{align}
    As a result, the measurement defined as \eqref{def:good-measurement} satisfies $B_y\le E_y$ for all $y\in \cY$, and thus, the desired inequality \eqref{eq:good-measurement} holds.
\end{proof}

Now, we prove Theorem~\ref{theorem:achievability} by applying Lemma~\ref{lemma:good-measurement}.
\begin{proof}[Proof of Theorem \ref{theorem:achievability}]
    The structure of this proof is also essentially the same as the method given in Ref.~\cite{wang2012one}.

    To show the inequality \eqref{eq:achievability},
    we fix an arbitrary parameter $\epsilon'\in(0,\epsilon)$ and an arbitrary probability distribution $P_X(x)$.
    Then, we take an effect $e$ that achieves the optimization $D^{\epsilon'}_{\rH,\rG}(\pi^{\rA\rB}_{P_X}||\pi^{\rA}_{P_X}\otimes\pi^{\rB}_{P_X})$.
    Therefore, the effect $e$ satisfies $0\le e\le u$ and $\langle e,\pi^{\rA\rB}_{P_X}\rangle\ge 1-\epsilon'$.
    Then, we need to show the existence of $(2^R,\epsilon)$-code satisfying
    \begin{equation}
        R\ge-\log\langle e,\pi^{\rA}_{P_X}\otimes\pi^{\rB}_{P_X}\rangle-\log\frac{t}{\epsilon-s\epsilon'}\;,\label{eq:achievability-desired}
    \end{equation}
    which can be rewritten as
    \begin{equation}
        \epsilon\le s\epsilon'+2^Rt\langle e,\pi^{\rA}_{P_X}\otimes\pi^{\rB}_{P_X}\rangle\;.\label{eq:error-goodcode}
    \end{equation}
    To show this,
    we need to show the existence of the code with the size $R$,
    i.e.,
    the existence of a measurement $\mathbf{m}^B$ such that
    \begin{equation}
        \Pr(\mathrm{error}|\mathbf{m}^B)\le s\epsilon'+2^R t\langle e,\pi^{\rA}_{P_X}\otimes\pi^{\rB}_{P_X}\rangle\;.\label{eq:error-achievability}
    \end{equation}

    Now, we consider the situation to send a $R$-length bit string $j\in\Gamma$.
    We generate an encoder $g:\Gamma\to\cX$ by choosing a codewords $g(j)=x_j\in\cX$ at random, where each $x_j$ is chosen according to the distribution $P_X$ independently.
    Also, let us define $0\le A_x\le u$ as an effect such that
    \begin{equation}
        \langle A_x,\rho^\rB\rangle=\langle e,\ketbra{x}{x}^\rA\otimes\rho^\rB\rangle\label{eq:Ax_e}
    \end{equation}
    for any $\rho^\rB\in\cS(K^\rB)$.
    By applying Lemma~\ref{lemma:good-measurement},
    we choose a decoding measurement $\mathbf{m}^\rB:=\{m^\rB_j\}_{j\in\Gamma}$ satisfying the following inequality holds for any $s>1$ and $t>s$:
    \begin{equation}
        u-m^\rB_j\le s\left(u-A_{g(j)}\right)+t\sum_{\substack{i\in\Gamma\\i\neq j}}A_{g(i)}\;.\label{eq:error_upperbound}
    \end{equation}
    Therefore, an upper bound of
    the error probability swith respect to a message $j\in\Gamma$ with an encoder $g$
    is bounded as follows:
    \begin{equation}
    \mathrm{Pr}(\mathrm{error}|j,g,\mathbf{m}^\rB)\le s\left\langle u-A_{g(j)},\sigma^\rB_{g(j)}\right\rangle+t\sum_{\substack{i\in\Gamma\\i\neq j}}\left\langle A_{g(i)},\sigma^\rB_{g(j)}\right\rangle\;.\label{eq:direct_error}
    \end{equation}

    Next, for an arbitrary fixed bit-string $j$,
    we consider the average of the error probability $\mathrm{Pr}(\mathrm{error}|j,g,\mathbf{m}^\rB)$ over all the encoders $g$.
    We denote the probability to generate $g$ and the average value as $P(g)$ and $\mathrm{Pr}(\mathrm{error}|j,\overline{g},\mathbf{m}^\rB)$, respectively.
    Also, we denote the set of all encoders and the set of encoders satisfying $g(j)=x$ as $G$ and $G_x$, respectively.
    Then, the value $\mathrm{Pr}(\mathrm{error}|j,\overline{g},\mathbf{m}^\rB)$ is bounded as follows due to the way to generate $g$:
    \begin{align}
        \Pr(\mathrm{error}|j,\overline{g},\mathbf{m}^\rB)
        :=&\sum_{g\in G} P(g) \Pr(\mathrm{error}|j,g,\mathbf{m}^\rB)\\
        \stackrel{(a)}{\le} & \sum_{g\in G} P(g) \left(s\left\langle u-A_{g(j)},\sigma^\rB_{g(j)}\right\rangle+t\sum_{\substack{i\in\Gamma\\i\neq j}}\left\langle A_{g(i)},\sigma^\rB_{g(j)}\right\rangle\right)\\
        = & \sum_{x\in\cX} \sum_{g\in G_x} P(g) \left(s\left\langle u-A_{g(j)},\sigma^\rB_{g(j)}\right\rangle+t\sum_{\substack{i\in\Gamma\\i\neq j}}\left\langle A_{g(i)},\sigma^\rB_{g(j)}\right\rangle\right)\\
        \stackrel{(b)}{=} &\sum_{x\in\cX} \sum_{g\in G_x} P(g) s\left\langle u-A_{x},\sigma^\rB_{x}\right\rangle+\sum_{x\in\cX} \sum_{g\in G_x} P(g) t\sum_{\substack{i\in\Gamma\\i\neq j}}\left\langle A_{g(i)},\sigma^\rB_{x}\right\rangle\\
        \stackrel{(c)}{=} &\sum_{x\in\cX} P_X(x) s\left\langle u-A_{x},\sigma^\rB_{x}\right\rangle+\sum_{x\in\cX} \sum_{g\in G_x} P(g) t\sum_{\substack{i\in\Gamma\\i\neq j}}\left\langle A_{g(i)},\sigma^\rB_{x}\right\rangle\\
        \stackrel{(d)}{=} &\sum_{x\in\cX} P_X(x) s\left\langle u-A_{x},\sigma^\rB_{x}\right\rangle+\sum_{x\in\cX} t\sum_{\substack{i\in\Gamma\\i\neq j}}\left\langle P_X(x)\sum_{x'\in\cX} P_X(x')A_{x'},\sigma^\rB_{x}\right\rangle\\
        \stackrel{(e)}{=} & s\left( 1-\sum_{x\in\cX} P_X(x)\left\langle A_{x},\sigma^\rB_{x}\right\rangle\right)+t(2^R-1)\left\langle \sum_{x'\in\cX} P_X(x')A_{x'},\sum_{x\in\cX} P_X(x) \sigma^\rB_{x}\right\rangle\\
        \stackrel{(f)}{\le} & s\left( 1-\sum_{x\in\cX} P_X(x)\left\langle A_{x},\sigma^\rB_{x}\right\rangle\right)+t2^R\left\langle \sum_{x'\in\cX} P_X(x')A_{x'},\sum_{x\in\cX} P_X(x) \sigma^\rB_{x}\right\rangle\;.
    \end{align}
    The inequality $(a)$ is derived from \eqref{eq:direct_error}.
    The equation $(b)$ is derived from the definition of $G_x$.
    The equation $(c)$ holds because the first term is independent with $g$ and the average over $g\in G_x$ is given by $P_X(x)$ due to the way to generate $g$.
    Because each codeword $g(i)$ is chosen at random with probability $P_X$ independent of the choice of $g(j)=x$ in the second term, the average of $A_{g(i)}$ over $g\in G_x$ is given as $P_X(x)\sum_{x'\in\cX}P_X(x')A_{x'}$.
    Therefore, the equation $(d)$ holds.
    The equation $(e)$ holds because the second term is independent of $j$.
    The inequality $(f)$ is shown from the trivial inequality $2^R-1\le 2^R$ and the positivity of the inner product.

    By applying the above upper bound, the average of $\mathrm{Pr}(\mathrm{error}|g,\mathbf{m}^\rB)$ over all of the encoders $g$ is bounded as follows:
    \begin{equation}\label{eq:error_p}
        \begin{split}
            \mathrm{Pr}(\mathrm{error}|\overline{g},\mathbf{m}^\rB):=&\sum_{g\in G} P(g) \sum_{j}\frac{1}{2^R}\mathrm{Pr}(\mathrm{error}|j,g,\mathbf{m}^\rB)\\
            = &\sum_{j}\frac{1}{2^R} \sum_{g\in G} P(g) \mathrm{Pr}(\mathrm{error}|j,g,\mathbf{m}^\rB)\\
            \le & \sum_{j}\frac{1}{2^R} \left(s\left( 1-\sum_{x\in\cX} P_X(x)\left\langle A_{x},\sigma^\rB_{x}\right\rangle\right)+t2^R\left\langle \sum_{x'\in\cX} P_X(x')A_{x'},\sum_{x\in\cX} P_X(x) \sigma^\rB_{x}\right\rangle\right)\\
             \stackrel{(a)}{=} & s\left( 1-\sum_{x\in\cX} P_X(x)\left\langle A_{x},\sigma^\rB_{x}\right\rangle\right)+t2^R\left\langle \sum_{x'\in\cX} P_X(x')A_{x'},\sum_{x\in\cX} P_X(x) \sigma^\rB_{x}\right\rangle\;.
        \end{split}
    \end{equation}
    The equation $(a)$ holds because the summation part is independent of $j$.

    Finally, by using Eq.~\eqref{eq:Ax_e} and the linearity of the inner product, we have
    \begin{equation}\label{eq:pi_AB}
    \begin{split}
        \sum_{x\in\cX} P_X(x)\langle A_x,\sigma^\rB_x\rangle=&\sum_{x\in\cX}P_X(x)\langle e,\ketbra{x}{x}^\rA\otimes\sigma^\rB_x\rangle\\
        =&\left\langle e,\sum_{x\in\cX}P_X(x)\ketbra{x}{x}^\rA\otimes\sigma^\rB_x\right\rangle\\
        =&\langle e,\pi_{P_X}^{\rA\rB}\rangle\ge 1-\epsilon'\;.
    \end{split}
    \end{equation}
    Also, we have
    \begin{equation}\label{eq:piA_times_piB}
    \begin{split}
        \left\langle \sum_{x'}P_X(x')A_{x'},\sum_x P_X(x)\sigma^\rB_x\right\rangle=&\sum_{x'}P_X(x')\left\langle A_{x'}\sum_xP_X(x)\sigma^\rB_x\right\rangle\\
            \stackrel{(a)}{=}&\sum_{x'}P_X(x')\left\langle e,\ketbra{x'}{x'}^\rA\otimes\sum_xP_X(x)\sigma^\rB_x\right\rangle\\
            =&\left\langle e,\sum_{x'}P_X(x')\ketbra{x'}{x'}^\rA\otimes\sum_xP_X(x)\sigma^\rB_x\right\rangle\\
            =&\langle e,\pi^{\rA}_{P_X}\otimes\pi^{\rB}_{P_X}\rangle\;.
    \end{split}
    \end{equation}
    The equality $(a)$ is derived directly from \eqref{eq:Ax_e}.
    Substituting \eqref{eq:pi_AB} and \eqref{eq:piA_times_piB} for \eqref{eq:error_p}, we obtain the inequality
    \begin{equation}
        \Pr(\mathrm{error}|\overline{g},\mathbf{m}^B)\le s\epsilon'+2^Rt\langle e,\pi^{\rA}_{P_X}\otimes\pi^{\rB}_{P_X}\rangle\;.
    \end{equation}
    Due to the same logic as Shannon's random encoding,
    there exists at least one encoder and decoder satisfying the desired inequality \eqref{eq:error-achievability}.
    As a result, we show the existence of the $(2^R,\epsilon)$ code satisfying \eqref{eq:achievability-desired},
    which implies the statement of Theorem~\ref{theorem:achievability}.
\end{proof}

\subsection{Proof of Theorem~\ref{theorem:asymptotic}}

The remaining part of the proof of Theorem~\ref{theorem:asymptotic} is the following lemma.

\begin{lemma}\label{lemma:continuous}
    For any states $\rho,\sigma\in\cS(K)$ and $\epsilon>0$,
    the $\epsilon$-hypothesis testing relative entropy is left-continuous over $\epsilon$.
\end{lemma}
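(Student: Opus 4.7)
The plan is to reduce left-continuity of $D^\epsilon_{\rH,\rG}(\rho\|\sigma)$ to left-continuity of the function
\begin{equation*}
f(\epsilon) := \min_{\substack{q:\,0\le q\le u,\\ \langle q,\rho\rangle\ge 1-\epsilon}} \langle q,\sigma\rangle,
\end{equation*}
since $D^\epsilon_{\rH,\rG}(\rho\|\sigma)=-\log_2 f(\epsilon)$ and $-\log_2$ is continuous on $(0,1]$. The minimum is attained because $\{q: 0\le q\le u,\ \langle q,\rho\rangle\ge 1-\epsilon\}$ is a closed bounded (hence compact) subset of the finite-dimensional space $V^\ast$ and $q\mapsto\langle q,\sigma\rangle$ is continuous.

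First I would observe that $f$ is non-increasing in $\epsilon$, since enlarging $\epsilon$ enlarges the feasible set. Hence the one-sided limit $L := \lim_{\epsilon'\uparrow\epsilon} f(\epsilon')$ exists and satisfies $L\ge f(\epsilon)$. The content of the lemma is the matching upper bound $L\le f(\epsilon)$.

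For this, let $q^\ast$ achieve $f(\epsilon)$ and, for $\epsilon'\in(0,\epsilon)$, form the convex combination $q_{\epsilon'} := (1-t)q^\ast + t u$ with $t := 1-\epsilon'/\epsilon\in(0,1)$. Then $0\le q_{\epsilon'}\le u$, and by linearity of the inner product,
\begin{equation*}
\langle q_{\epsilon'},\rho\rangle \ge (1-t)(1-\epsilon)+t \cdot 1 = 1-(1-t)\epsilon = 1-\epsilon',
\end{equation*}
so $q_{\epsilon'}$ is feasible for parameter $\epsilon'$. Consequently,
\begin{equation*}
f(\epsilon')\le \langle q_{\epsilon'},\sigma\rangle = (1-t)f(\epsilon)+t = \frac{\epsilon'}{\epsilon}\,f(\epsilon)+\Bigl(1-\frac{\epsilon'}{\epsilon}\Bigr),
\end{equation*}
whose limit as $\epsilon'\uparrow\epsilon$ equals $f(\epsilon)$. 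Combined with monotonicity, this gives $\lim_{\epsilon'\uparrow\epsilon} f(\epsilon')=f(\epsilon)$, and applying $-\log_2$ yields left-continuity of $D^\epsilon_{\rH,\rG}(\rho\|\sigma)$ at $\epsilon$.

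The only delicate point I would flag is the degenerate case $f(\epsilon)=0$, i.e.\ $D^\epsilon_{\rH,\rG}(\rho\|\sigma)=+\infty$. The same construction still gives $f(\epsilon')\le 1-\epsilon'/\epsilon\to 0$, so $D^{\epsilon'}_{\rH,\rG}(\rho\|\sigma)\to+\infty$, matching the convention that $-\log_2 0=+\infty$; thus the argument goes through uniformly and no separate treatment is needed. I do not expect any real obstacle here beyond this bookkeeping, since the proof is essentially a one-line convex-interpolation argument between the optimal effect $q^\ast$ and the unit effect $u$.
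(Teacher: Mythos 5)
Your proof is correct and follows essentially the same route as the paper: both interpolate the optimal effect $q^\ast$ with the unit effect $u$, with mixing weight proportional to the gap $\epsilon-\epsilon'$, to produce a feasible effect at level $\epsilon'$ whose type-II error exceeds the optimum by an amount vanishing as $\epsilon'\uparrow\epsilon$. Your version is slightly tidier in phrasing the argument via monotonicity plus a direct limit (and in flagging the $f(\epsilon)=0$ case), but the key construction is identical.
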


\begin{proof}[Proof of Lemma~\ref{lemma:continuous}]
   Due to the definition \eqref{eq:def_HTRE_GPT},
we only need to show the continuity of the function $D(\epsilon)$ defined as
\begin{align}\label{eq:lemma-continuous}
	D(\epsilon):=\min_{\substack{q:\:0\le q\le u,\\\langle q,\rho\rangle\ge 1-\epsilon}}\langle q,\sigma\rangle
\end{align}
for any $\rho,\sigma\in\cS(K)$.
Because the effect $(1-\epsilon)u$ satisfies the condition of the minimization in \eqref{eq:lemma-continuous}, $0\le D(\epsilon)\le 1-\epsilon$ holds.
Let $\eta>0$ be an arbitrary parameter for the so-called ``$\epsilon-\delta$ discussion".
We only need to show the existence of $\delta$ for any $0<\eta<1-D(\epsilon)$ such that any $\epsilon'<\epsilon$ with $\epsilon-\epsilon'<\delta$ satisfies $|D(\epsilon)-D(\epsilon')|<\eta$.
Take an argument-minimum effect $q_0:=\mathrm{argmin}\;D(\epsilon)$, and we choose $\delta$ as $\delta:=\frac{\epsilon\eta}{1-D(\epsilon)}$.
Besides, we take an effect $q_1$ as
\begin{align}
    q_1:=\frac{\delta}{\epsilon}u+\left(1-\frac{\delta}{\epsilon}\right)q_0\;.
\end{align}
Because $0<\frac{\delta}{\epsilon}\le1$, the effect $q_1$ satisfies $0\le q_1\le u$ and the following inequality:
\begin{align}\label{eq:proof-conti-1}
    \begin{aligned}
        \langle q_1,\rho\rangle=&\left(1-\frac{\delta}{\epsilon}\right)\langle q_0,\rho\rangle + \frac{\delta}{\epsilon}\langle u,\rho\rangle\\
	\ge &\left(1-\frac{\delta}{\epsilon}\right)(1-\epsilon)+\frac{\delta}{\epsilon}\\
	=& 1-\epsilon+\delta\ge 1-\epsilon'\;.
    \end{aligned}
\end{align}
Then, we obtain the following inequality:
\begin{align}
    |D(\epsilon)-D(\epsilon')|\stackrel{(a)}{=}&D(\epsilon')-D(\epsilon)\\
    =&\min_{\substack{q':\:0\le q'\le u,\\\langle q',\rho\rangle\ge 1-\epsilon'}}\langle q',\sigma\rangle-D(\epsilon)\\
    \stackrel{(b)}{\le}&\langle q_1,\sigma\rangle - D(\epsilon)\\
    =&\frac{\delta}{\epsilon}(1-D(\epsilon))=\eta\;.
\end{align}
The equation $(a)$ holds because $\epsilon'<\epsilon$.
The inequality $(b)$ holds because of the inequality \eqref{eq:proof-conti-1}.
As a result, $D(\epsilon)$ is left-continuous, and therefore, $\epsilon$-hypothesis testing relative entropy is also left-continuous over $\epsilon$.
\end{proof}

\end{document}